\newtheorem{theorem}{Theorem}
\newtheorem{definition}{Definition}
\newtheorem{lemma}{Lemma}
\newtheorem{remark}{Remark}[theorem]
\def\symbolfootnote[#1]#2{\begingroup
\def\thefootnote{\fnsymbol{footnote}}
\footnote[#1]{#2}\endgroup}
\begin{document}

\title{Smart Meter Privacy for Multiple Users \\in the Presence of an \\Alternative Energy Source}

\author{\small \authorblockN{Jesus Gomez-Vilardebo$^*$ and Deniz G\"{u}nd\"{u}z$^\dagger$} \\
\authorblockA{$^*$ Centre Tecnol\`ogic de Telecomunicacions
de Catalunya (CTTC), Castelldefels, Spain \\
$^\dagger$ Imperial College London,  London, UK \\
jesus.gomez@cttc.es, d.gunduz@imperial.ac.uk}
}
\maketitle

\begin{abstract}
Smart meters (SMs) measure and report users' energy consumption to the utility provider (UP) in almost real-time, providing a much more detailed depiction of the consumer's energy consumption compared to their analog counterparts. This increased rate of information flow to the UP, together with its many potential benefits, raise important concerns regarding user privacy. This work investigates, from an information theoretic perspective, the privacy that can be achieved in a multi-user SM system in the presence of an alternative energy source (AES). To measure privacy, we use the mutual information rate between the users' real energy consumption profile and the SM readings that are available to the UP. The objective is to characterize the \textit{privacy-power function}, defined as the minimal information leakage rate that can be obtained with an average power limited AES. We characterize the privacy-power function in a single-letter form when the users' energy demands are assumed to be independent and identically distributed over time. Moreover, for binary and exponentially distributed energy demands, we provide an explicit characterization of the privacy-power function. For any discrete energy demands, we demonstrate that the privacy-power function can always be efficiently evaluated numerically. Finally, for continuous energy demands, we derive an explicit lower-bound on the privacy-power function, which is tight for exponentially distributed loads.

\end{abstract}

\section{Introduction}

\symbolfootnote[0]{This work was partially supported by the Catalan Government
under SGR2009SGR1046 and by the Spanish Government under projects
TEC2010-17816 and TSI-020400-2011-18.}

With the adoption of smart meters (SMs) in energy distribution networks the
utility providers (UPs) are able to monitor the grid more closely, and predict
the changes in the demand more accurately. This, in turn, allows the UPs to
increase the efficiency and the reliability of the grid by dynamically
adjusting the energy generation and distribution, as well as the prices,
thereby, also influencing the user demand. SMs also benefit the users
by allowing them to monitor their own energy consumption profile in almost
real time. Consumers can use this information to cut unnecessary consumption,
or to reduce the cost by dynamically shifting consumption based on the prices
dynamically set by the UPs.

SM deployment is spreading rapidly worldwide \cite{Wunderlich:AMCIS:12}. In Europe,
the adoption of SMs has been mandated by a directive of the European
Parliament \cite{EUD:09}, which requires 80\% SM adoption in all European
households by 2020 and 100\% by 2022. However, the massive deployment of SMs
at homes have also raised serious concerns regarding user privacy
\cite{McDaniel:SP:09}. High resolution SM readings can allow anyone who has
access to this data to infer valuable private information regarding user
behaviour, including the type of electrical equipments used, the time,
frequency and duration of usage \cite{Predunzi:PESWM:02}, and even the TV
channel that is being watched, as reported in \cite{Greveler:CPDP:12}. The privacy of smart meter data is more critical for businesses, such as data centers, factories, etc., whose energy consumption behaviour can reveal important information about their business to competitors. As pointed out in \cite{Molina-Markham10}, depending on the monitoring granularity different consumption patterns can be identified. With a granularity of hours or minutes, one can identify the user's presence, with a granularity of minutes or seconds one can infer the activities of appliances such as TV or refrigerator, and with a granularity of seconds one could detect bursts of power and identify the activity of appliances such as microwaves, coffee machines or toasters.

Several methods have been proposed in the literature to provide privacy to SM users while
keeping the benefits of SMs for control and
monitoring of the grid. In \cite{Efthymiou:SmartGridComm:10} user
anonymization is proposed by the participation of a trusted third party. Bohli
et al. \cite{Bohli:ICC:10} propose sending the aggregated energy consumption
of a group of users and in \cite{Wang:TSG:12} users protect their privacy by
adding random noise to their SM readings before being forwarded to the UP.
Similarly, \cite{Sankar:SmartGrid:TSG:13} proposes quantization of SM readings.

In all of the above work, privacy is obtained by distorting/transforming
the SM readings before being forwarded to the UP. However,
energy is provided to the user by the UP, and in principle, the UP can easily track user's
energy consumption by installing its own smart measurement devices at points
where the user connects to the grid. It seems that no level of privacy can be
achieved under such a strong assumption; however, users can conceal the
patterns corresponding to individual devices and usage patterns by
manipulating their energy consumption. This can be achieved either by
filtering the energy consumption over time by means of a storage device such as an electric car battery \cite{Kalogridis:SmartGridComm:10}, \cite{Varodayan:ICASP:11},
\cite{Yang:CCCS:12} and \cite{Tan:JSAC:13}, or by considering the availability
of an alternative energy source (AES)  \cite{Tan:JSAC:13},
\cite{Gunduz:ICC:13}. An AES can model a connection to a second energy grid, such as a microgrid, or a renewable energy source, such as a solar panel.

In our model, we assume that the users can satisfy part of their energy demand
from the AES. While the UP can track the energy it provides to the users
perfectly, it does not have access to the instantaneous values of the amount
of energy the user receives from the AES. Hence, 
a certain level of privacy will be achieved
depending on the amount of power available from the AES. For instance, if
the power that the AES can provide is sufficient enough to satisfy, at any time, all the
energy demand of the appliances, the privacy problem can be resolved in a
straightforward manner, as no power is requested from the power grid. However, in general, the AES will be limited in terms
of the average power it can support, and as we show in this paper, how the
user utilizes the energy provided by the AES is critical from the privacy
perspective. We measure the privacy through the mutual information rate
between the user's real energy consumption and the energy provided by the UP
(the SM readings). Mutual information has previously been proposed as a
measure of privacy in several works \cite{Agarwal:SPDS:01,RB:TKDE:10,SankarIEEETIFS13}, and in
particular, for SM systems in~\cite{Sankar:SmartGrid:TSG:13},
~\cite{Varodayan:ICASP:11} and \cite{Tan:JSAC:13}.

In our previous work \cite{Gunduz:ICC:13}, \cite{Gomez:ISIT:13} we have
characterized the minimum information leakage rate in the case of a single
user with an average and peak power constrained AES. We have shown that there
is a very close connection with this problem and the rate-distortion problem
in lossy source compression \cite{Cover:book} albeit with significant
differences. Here we generalize our results to multiple users. In this
scenario (see Fig. \ref{fig:scenario}), multiple users, each with its own
independent energy demand, share a single AES. The reason for users to share an AES can be economical. AESs, such as solar panels, and efficient energy storage units are expensive facilities, and may be shared by multiple parties to reduce cost. There could be also energy efficiency reasons: consider a scenario in which multiple smart meters belong to the same user; for example, different buildings of the same company. In such a case, the most energy efficient solution requires the centralized management of the AES for all the components of the system.

We assume that there is one
separate SM for each user, and the privacy is measured by the total
information leaked to the UP about the users' energy consumption. A single
energy management unit (EMU) receives users' instantaneous energy demands and
decides how much energy to provide to each user from the AES, while satisfying
the average power constraint. We first introduce the \textit{privacy-power
function} which characterizes the minimal information leakage rate to the UP
for a given AES average power constraint. We then provide a single-letter
information theoretic characterization of the privacy-power function for the
multi-user scenario when the input loads are independent and identically
distributed (i.i.d.) random variables. While the EMU can employ energy
management policies with memory, our result shows that a memoryless
energy management policy that randomly requests energy from the AES is
optimal, significantly simplifying the implementation.

We consider both discrete and continuous input loads. For discrete input load distributions, we first show that the optimal output alphabet can be limited to the input alphabet without loss of optimality, which allows us to write the privacy-power function as the solution of a convex optimization problem with linear constraints. As a result, the privacy-power function with discrete input loads can be evaluated numerically in polynomial time. We also provide a closed-form expression for the privacy-power function when the input loads are independent and binary distributed. Using numerical optimization, we compare the optimal privacy-power function with two heuristic power allocation schemes. 
We consider a time-division heuristic scheme which, at each time instant, obtains the requested energy either from the grid or from the AES, but not from both simultaneously. We also consider an output load limiting heuristic scheme which limits the output load to a fixed maximum value in order to cover up any variation in the energy demand beyond this value. We numerically show that our optimal scheme provides significant privacy gains compared to these heuristic energy management policies.

While the numerical evaluation of the privacy-power function for general continuous input load distributions is elusive, we derive the Shannon lower bound (SLB) on the privacy-power function, and show that this lower bound is tight when users have independent exponentially distributed input loads. For the latter case, we also show that the optimal allocation of the energy generated by the AES among the users can be obtained by the \textit{reverse waterfilling} algorithm \cite{Cover:book}. The users with low average input load satisfy all their demand from the AES, while the users with higher average load receive the same amount of energy from the grid.

The rest of the paper is organized as follows. In Section \ref{sec:Model}, we
introduce the system model, and provide a single-letter information theoretic
characterization of the privacy-power function when users have  i.i.d.
energy demands over time. Then we show that the privacy-power function for
independent users can be solved by simply minimizing the sum of the individual
privacy-power functions with a sum average power constraint. The derivation of
the privacy-power function for discrete input loads and its particularization
to binary input loads is addressed in Section \ref{sec:discrete}. Then in
Section \ref{sec:continous} the privacy-power function for continuous input
loads is studied and particularized to the exponential distribution. Numerical
results are provided in Section \ref{sec:num}. Finally, conclusions are drawn
in Section \ref{sec:con}.

\section{System Model}

\label{sec:Model}

\begin{figure*}[ptb]
\centering
\includegraphics[width=6in]{./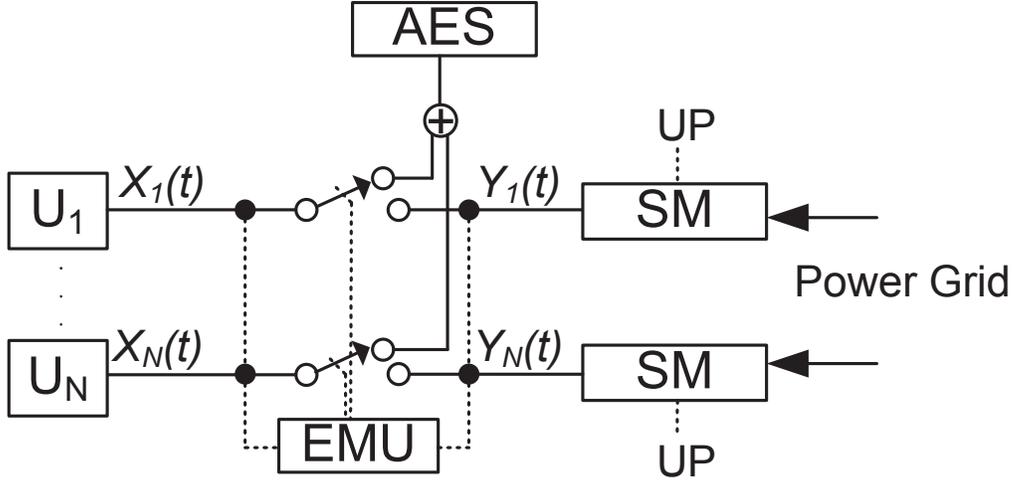}\caption{Smart meter model
studied in this paper. The EMU receives the energy demand from multiple users,
$U_{1}, \ldots, U_{N}$, and decides how much of the energy demand of each user
should be provided from the AES. The remainder of the energy demands are
satisfied from the grid, which are measured and reported by the SMs to the UP.
The privacy is measured through the information leakage rate, which measures
how much information the UP receives about the input load $[\mathbf{X}(1),
\ldots, \mathbf{X}(n)]$ by observing the SM readings $[\mathbf{Y}(1), \ldots,
\mathbf{Y}(n)]$.}%
\label{fig:scenario}%
\end{figure*}

We consider the discrete time SM model depicted in Fig. \ref{fig:scenario}. We
have $N$ users connected to the energy grid. The energy requested by user $i$
at time instant $t$ is denoted by $X_{i}(t) \in\mathcal{X}_{i}$, where
$\mathcal{X}_{i}$ is the support set of the energy demand of user $i$. We
consider the availability of an AES in the system. The AES can provide energy
to the users at a maximum average power of $P$. The AES reduces the energy
requested from the grid; but the primary use of the AES here is to create
privacy against the UP and other third parties.

The energy flow in the system is managed by the EMU. The EMU receives, at time
$t$, the energy demands of all the users, i.e., the vector $\mathbf{X}%
(t)=[X_{1}(t),...,X_{N}(t)]$. Part of the energy demand of the users can be
supported by the AES, while the remainder is provided directly from the energy
grid. We denote by $Y_{i}(t)\in\mathcal{Y}_{i}$, the amount of energy user $i$
gets from the grid at time $t$, or equivalently, the reading of SM $i$ at time
$t$. We define $\mathbf{Y}(t)=[Y_{1}(1),...,Y_{N}(t)]$ as the aggregated SM
readings available to the UP at time $t$. The energy demand of each user has
to be satisfied fully at any time, that is, we do not allow outages or
delaying/shifting the user demand. Moreover, we do not allow increasing
privacy at the expense of wasting energy, i.e., we have $0 \leq Y_{i}(t) \leq
X_{i}(t)$ for all $t$.

At the EMU, we consider energy management policies which, at each time instant
$t$, decide on the amount of power that will be provided from the AES to each
of the users based on the input loads up to time $t$, $\mathbf{X}%
^{t}=[\mathbf{X}(1),...,\mathbf{X}(t)]$, and the output loads up to the
previous time instant, $\mathbf{Y}^{t-1}=[\mathbf{Y}(1),...,\mathbf{Y}(t-1)]$.
We allow stochastic energy management policies, that is, the output load at
time $t$, $\mathbf{Y}(t)$, can be a random function of $\mathbf{X}^{t}$ and
$\mathbf{Y}^{t-1}$. We assume that, while the UP knows $P$, the average power
generated by the AES, it does not have access to the instantaneous values of
the energy users receive from the AES.

\begin{definition}
Denote the vector of input and output load alphabets for all the users as
$\mathcal{X}^{N~}=[\mathcal{X}_{1},...,\mathcal{X}_{N}]$ and $\mathcal{Y}%
^{N~}=[\mathcal{Y}_{1},...,\mathcal{Y}_{N}]$, respectively. A length-$n$
\textit{energy management policy} is composed of, possibly stochastic, power
allocation functions
\begin{equation}
f_{t}:\mathcal{X}^{N~\times~t}\times\mathcal{Y}^{N~\times~(t-1)}%
\rightarrow\mathcal{Y}^{N},
\end{equation}
for $t=1,...,n,$ such that
\begin{equation}
\mathbf{Y}(t)=f_{t}(\mathbf{X}(1),\ldots,\mathbf{X}(t),\mathbf{Y}%
(1),\ldots,\mathbf{Y}(t-1)),
\end{equation}
with $X_{i}(t)\geq Y_{i}(t)\geq0$ for all $1\leq i\leq N$ and $1\leq t\leq n$.
\end{definition}

We measure the privacy achieved by an $n-$length energy management policy with
the \textit{information leakage rate}. Assuming that the statistical behavior
of the energy demand is known by the UP, its initial uncertainty about the
real energy consumption can be measured by the entropy rate $\frac{1}%
{n}H(\mathbf{X}^{n})$. This uncertainty is reduced to $\frac{1}{n}%
H(\mathbf{X}^{n}|\mathbf{Y}^{n})$ once the UP observes the output load. Hence,
the information leaked to the UP can be measured by the reduction in the
uncertainty, or equivalently, by the mutual information rate between the input
and the output loads,
\begin{equation}
I_{n} \triangleq\frac{1}{n}I\left(  \mathbf{X}^{n};\mathbf{Y}^{n}\right)  .
\end{equation}
Notice that if we could provide all the energy required by the users from the
AES, we could achieve perfect privacy, i.e., we would have $I_{n}=0$ for all
$n$, by letting $Y_{i}(t)=0$ for all $i$ and $t$. However, in general the AES
will be limited in terms of the average power it can provide.


We are thus interested in characterizing the \emph{achievable} level of
privacy as a function of the average power $P$ that is provided by the AES,
given by
\begin{equation}
P_{n}=\mathds{E}\left[  \overset{N}{\sum_{i=1}}\frac{1}{n}\overset{n}%
{\sum_{t=1}} (X_{i}(t)-Y_{i}(t)) \right]  ,
\end{equation}
where the expectation is take over the joint probability distribution of the
input and output loads.

\begin{definition}
An information leakage rate - average power pair $(I,P)$ is said to be
\textit{achievable} if there exists a sequence of energy management policies
of duration $n$ with $\lim_{n\rightarrow\infty}I_{n}\leq I$, and
$\lim_{n\rightarrow\infty}P_{n}\leq P$.
\end{definition}

\begin{definition}
The \textit{privacy-power function}, $\mathcal{I}(P)$, is the infimum of the
information leakage rates $I$ such that $(I,P)$ is achievable.
\end{definition}

The privacy-power function characterizes the level of privacy that can be
achieved by an average power limited AES. The goal of the EMU is to achieve
the minimum information leakage rate by optimally allocating the limited
energy from the AES over the users and time.

This model of an AES is appropriate for energy sources with their own large
energy storage unit, which can provide energy reliably at a certain rate for a
sufficiently long duration of time. A peak power constraint on the AES, in
addition to the average power constraint, is also considered in
\cite{Gomez:ISIT:13}. On the other hand, in \cite{Tan:JSAC:13} we have
explicitly considered the energy generation process at the AES, in which case
the EMU is limited not only by the average power it can pull from the AES, but
also the generated energy plus the energy available in the battery at each
time instant. Such instantaneous constraints that vary over time depending on
the energy management policy and the energy arrival process at the AES, render
the analysis significantly harder as they prevent us from invoking
information theoretic arguments that will be instrumental in obtaining the
single-letter results in this work.

Our goal here is to give a mathematically tractable expression for the
privacy-power function, and identify the optimal energy management policy that
achieves it. In the rest of the paper, we consider i.i.d. input
loads for simplicity, as this will allow us to obtain a single-letter expression for the privacy-power function. Note that in most real-life applications there is a significant correlation among energy demands over time. The i.i.d. assumption allows us to characterize the optimal privacy-preserving solutions, which will be instrumental in identifying solutions for more realistic energy consumption models. Moreover, the i.i.d input load model might be valid in scenarios where the energy consumption either does not have memory at any time scale, or can be modelled as i.i.d. over the time scale of interest. This could be the case, for example, when there is a huge number of applications in use at any time, e.g., in a data center, where the input load can be modelled as i.i.d. over time for different traffic/ load states.

In the next theorem, we show that if the input load vectors
$\mathbf{X}(t)$ are i.i.d. over time with $f_{\mathbf{X}}(\mathbf{x})$, we can
characterize the function $\mathcal{I}(P)$ in a single-letter format. Note
that the instantaneous energy demands of the users can be correlated with each other.

\begin{theorem}
\label{t:EhTp_function} The privacy-power function $\mathcal{I}(P)$ for an
i.i.d. input load vector $\mathbf{X}=[X_{1},\ldots,X_{N}]$ with distribution
$f_{\mathbf{X}}(\mathbf{x})$ is given by
\begin{equation}
\mathcal{I}(P)=\inf_{\substack{f_{\mathbf{Y|X}}(\mathbf{y|x}%
):\mathds{E}\left[  \sum_{i=1}^{N}(X_{i}-Y_{i})\right]  \leq P,\\0\leq
Y_{i}\leq X_{i},\text{ }i=1,..N}}I(\mathbf{X};\mathbf{Y}), \label{ProblemF}%
\end{equation}
where $\mathbf{Y}=[Y_{1},\ldots,Y_{N}]$ is the corresponding vector of SM readings.
\end{theorem}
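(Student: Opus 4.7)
The plan is to prove Theorem~\ref{t:EhTp_function} by the standard two-sided argument used in rate-distortion theory: a converse lower-bounding $I_n$ by the single-letter expression evaluated at $P_n$, followed by an achievability showing the bound is attained by a memoryless stochastic policy.

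For the converse, I would fix any length-$n$ policy with leakage $I_n$ and power $P_n$, and single-letterize $I(\mathbf{X}^n;\mathbf{Y}^n)$ exploiting that $\mathbf{X}(1),\dots,\mathbf{X}(n)$ are i.i.d. The chain of manipulations is
\begin{align*}
n I_n &= H(\mathbf{X}^n)-H(\mathbf{X}^n\mid \mathbf{Y}^n) \\
&=\sum_{t=1}^n H(\mathbf{X}(t))-\sum_{t=1}^n H\bigl(\mathbf{X}(t)\mid \mathbf{X}^{t-1},\mathbf{Y}^n\bigr)\\
&\geq \sum_{t=1}^n \bigl[H(\mathbf{X}(t))-H(\mathbf{X}(t)\mid \mathbf{Y}(t))\bigr]=\sum_{t=1}^n I(\mathbf{X}(t);\mathbf{Y}(t)),
\end{align*}
where the first equality uses the i.i.d. assumption and the inequality is \textquotedblleft conditioning reduces entropy''. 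Then I introduce a time-sharing random variable $T$ uniform on $\{1,\dots,n\}$, independent of everything, and set $\tilde{\mathbf{X}}=\mathbf{X}(T)$, $\tilde{\mathbf{Y}}=\mathbf{Y}(T)$. Because the $\mathbf{X}(t)$ are i.i.d., $\tilde{\mathbf{X}}$ has the correct marginal $f_{\mathbf{X}}$, and
\[\tfrac{1}{n}\sum_{t=1}^n I(\mathbf{X}(t);\mathbf{Y}(t))=I(\tilde{\mathbf{X}};\tilde{\mathbf{Y}}\mid T)\geq I(\tilde{\mathbf{X}};\tilde{\mathbf{Y}}).\]
The pathwise constraints $0\leq Y_i(t)\leq X_i(t)$ transfer directly to $0\leq \tilde Y_i\leq \tilde X_i$, and by linearity of expectation $\mathds{E}[\sum_i(\tilde X_i-\tilde Y_i)]=P_n$, so $(\tilde{\mathbf{X}},\tilde{\mathbf{Y}})$ is feasible for the single-letter minimization at power $P_n$. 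A standard monotonicity/continuity argument on the single-letter function as $n\to\infty$ closes the converse.

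For achievability, I fix any feasible single-letter conditional $f_{\mathbf{Y}\mid\mathbf{X}}$ attaining (or approaching) the infimum, and let the EMU act memorylessly: at each time $t$, given $\mathbf{X}(t)$, draw $\mathbf{Y}(t)\sim f_{\mathbf{Y}\mid\mathbf{X}}(\cdot\mid \mathbf{X}(t))$ independently of the past. Because the $\mathbf{X}(t)$ are i.i.d. and the kernel is memoryless, the pairs $(\mathbf{X}(t),\mathbf{Y}(t))$ are i.i.d., giving $I(\mathbf{X}^n;\mathbf{Y}^n)=n I(\mathbf{X};\mathbf{Y})$ and $P_n=\mathds{E}[\sum_i(X_i-Y_i)]\leq P$, with the pathwise constraints $0\leq Y_i(t)\leq X_i(t)$ inherited from the support of $f_{\mathbf{Y}\mid\mathbf{X}}$.

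The main subtlety, rather than a serious obstacle, is verifying that the time-sharing/mixture preserves feasibility in both the pathwise sense ($0\leq Y_i\leq X_i$) and the average-power sense, simultaneously for the discrete and continuous load cases; since the per-slot constraints are convex and the mutual information is jointly convex in the conditional kernel for fixed input distribution, both the constraint set and the objective pass cleanly through the time-sharing step. A second minor point is handling possibly non-attained infima via a limiting sequence of kernels; this is addressed by continuity of the single-letter function in $P$ over its feasible range.
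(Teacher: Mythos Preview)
Your proof is correct and follows essentially the same route as the paper: identical achievability via a memoryless kernel, and the same single-letterization of $I(\mathbf{X}^n;\mathbf{Y}^n)$ down to $\sum_t I(\mathbf{X}(t);\mathbf{Y}(t))$. The only cosmetic difference is in the final step: the paper invokes the convexity and monotonicity of $\mathcal{I}(\cdot)$ (stated separately as Lemma~\ref{l:convexity}) together with Jensen's inequality, whereas you absorb that step into the time-sharing variable $T$ and the inequality $I(\tilde{\mathbf{X}};\tilde{\mathbf{Y}}\mid T)\geq I(\tilde{\mathbf{X}};\tilde{\mathbf{Y}})$ (valid here because $T\perp\tilde{\mathbf{X}}$); these are two standard, equivalent packagings of the same convexity argument.
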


Some basic properties of the privacy-power function $\mathcal{I}(P)$ are
characterized in the following lemma. The proof follows from standard techniques based on time-sharing arguments \cite{Cover:book}.

\begin{lemma}
\label{l:convexity} The privacy-power function $\mathcal{I}(P)$, given above,
is a non-increasing convex function of $P$.
\end{lemma}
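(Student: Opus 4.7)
The plan is to prove the two claims separately, using the single-letter characterization established in Theorem~\ref{t:EhTp_function}.

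For monotonicity, I would observe that if $P_1 \leq P_2$, then any conditional distribution $f_{\mathbf{Y}|\mathbf{X}}$ feasible for the optimization in \eqref{ProblemF} at power $P_1$ is also feasible at power $P_2$, since the pointwise constraints $0 \le Y_i \le X_i$ are identical and the average-power constraint $\mathds{E}[\sum_i (X_i - Y_i)] \le P_1 \le P_2$ is relaxed. Hence the feasible set at $P_2$ contains that at $P_1$, so the infimum cannot increase, giving $\mathcal{I}(P_2) \le \mathcal{I}(P_1)$.

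For convexity, I would use a standard time-sharing argument. Fix $P_1, P_2 \ge 0$ and $\lambda \in [0,1]$, and for any $\epsilon > 0$ pick $f_{\mathbf{Y}|\mathbf{X}}^{(1)}$ and $f_{\mathbf{Y}|\mathbf{X}}^{(2)}$ that are feasible at powers $P_1$ and $P_2$ and achieve mutual informations within $\epsilon$ of $\mathcal{I}(P_1)$ and $\mathcal{I}(P_2)$ respectively. Introduce an auxiliary Bernoulli random variable $Q$ with $\Pr(Q{=}1)=\lambda$ and $\Pr(Q{=}2)=1-\lambda$, independent of $\mathbf{X}$, and define the composite kernel $f_{\mathbf{Y}|\mathbf{X}}(\mathbf{y}|\mathbf{x}) = \lambda f^{(1)}_{\mathbf{Y}|\mathbf{X}}(\mathbf{y}|\mathbf{x}) + (1-\lambda) f^{(2)}_{\mathbf{Y}|\mathbf{X}}(\mathbf{y}|\mathbf{x})$, which is equivalent to drawing $Q$ first and then using kernel $Q$. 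The pointwise constraints $0 \le Y_i \le X_i$ remain satisfied, and the average-power constraint evaluates to $\lambda \mathds{E}[\sum_i (X_i - Y_i)|Q{=}1] + (1-\lambda)\mathds{E}[\sum_i (X_i - Y_i)|Q{=}2] \le \lambda P_1 + (1-\lambda) P_2$, so this kernel is feasible at the convex-combined power level.

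The final step bounds the mutual information: since $Q \perp \mathbf{X}$,
\begin{equation}
I(\mathbf{X};\mathbf{Y}) \le I(\mathbf{X};\mathbf{Y},Q) = I(\mathbf{X};Q) + I(\mathbf{X};\mathbf{Y}|Q) = I(\mathbf{X};\mathbf{Y}|Q),
\end{equation}
which by construction equals $\lambda I(\mathbf{X};\mathbf{Y}|Q{=}1) + (1-\lambda) I(\mathbf{X};\mathbf{Y}|Q{=}2) \le \lambda \mathcal{I}(P_1) + (1-\lambda)\mathcal{I}(P_2) + \epsilon$. Combining with feasibility of the composite kernel yields $\mathcal{I}(\lambda P_1 + (1-\lambda) P_2) \le \lambda \mathcal{I}(P_1) + (1-\lambda)\mathcal{I}(P_2) + \epsilon$, and letting $\epsilon \to 0$ gives convexity. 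I do not anticipate a serious obstacle here: the only subtlety worth checking is that the infimum in \eqref{ProblemF} may not be attained, which is handled cleanly by the $\epsilon$-argument above, and that the mixing of kernels preserves the pointwise constraint $0\le Y_i \le X_i$, which is immediate since both component kernels already satisfy it.
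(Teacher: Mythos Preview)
Your proof is correct and follows exactly the standard time-sharing argument the paper invokes (the paper does not spell out the details, merely citing \cite{Cover:book}). One minor wording caveat: avoid saying you are ``using the single-letter characterization established in Theorem~\ref{t:EhTp_function}'', since Lemma~\ref{l:convexity} is itself invoked in the converse proof of that theorem; what you actually need---and correctly use---is only the definition of $\mathcal{I}(P)$ via \eqref{ProblemF}, so there is no circularity in the argument itself.
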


Next we prove Theorem \ref{t:EhTp_function}.

\begin{proof}
We first prove the achievability. Given a conditional probability distribution
$f_{\mathbf{Y|X}}(\mathbf{y|x})$ that satisfies (\ref{ProblemF}), we generate
each $\mathbf{Y}(t)$ independently using $f_{\mathbf{Y|X}}(\mathbf{y}%
(t)\mathbf{|x}(t))$. The mutual information leakage rate is then given by
$I(\mathbf{X};\mathbf{Y})$ whereas the average power constraint in
(\ref{ProblemF}) is trivially satisfied.

For the converse, assume that there is an $n-$length energy management policy
that satisfies the\ instantaneous and average constraints in (\ref{ProblemF}).
Let $H(\mathbf{X})$ denote the entropy of the random variable $\mathbf{X}$.
The information leakage rate of the resulting output load vector will satisfy
the following chain of inequalities:
\begin{subequations}
\begin{align}
\frac{1}{n}I(\mathbf{X}^{n};\mathbf{Y}^{n})  &  =\frac{1}{n}\left[
H(\mathbf{X}^{n})-H(\mathbf{X}^{n}|\mathbf{Y}^{n})\right]  , \label{conv_eq_1}%
\\
&  =\frac{1}{n}\sum_{t=1}^{n}\left[  H(\mathbf{X}(t))-H(\mathbf{X}%
(t)|\mathbf{X}^{t-1}\mathbf{Y}^{n})\right]  ,\label{conv_eq_2}\\
&  \geq\frac{1}{n}\sum_{t=1}^{n}\left[  H(\mathbf{X}(t))-H(\mathbf{X}%
(t)|\mathbf{Y}(t))\right]  ,\label{conv_eq_3}\\
&  =\frac{1}{n}\sum_{t=1}^{n}I(\mathbf{X}(t);\mathbf{Y}(t)), \label{conv_eq_4}%
\\
&  \geq\frac{1}{n}\sum_{t=1}^{n}\mathcal{I}\left(  \mathds{E}\left[
\sum_{i=1}^{N} X_{i}(t) - Y_{i}(t) \right]  \right)  ,\label{conv_eq_5}\\
&  \geq\mathcal{I}\left(  \frac{1}{n}\sum_{t=1}^{n}\mathds{E}\left[
\sum_{i=1}^{N} X_{i}(t) - Y_{i}(t) \right]  \right)  ,\label{conv_eq_6}\\
&  \geq\mathcal{I}(P), \label{conv_eq_7}%
\end{align}
where (\ref{conv_eq_2}) follows from the assumption that the input loads are
i.i.d. over time, (\ref{conv_eq_3}) follows as conditioning reduces entropy;
(\ref{conv_eq_5}) follows from the definition of the privacy-power function
$\mathcal{I}(\cdot)$; (\ref{conv_eq_6}) follows from the convexity of function
$\mathcal{I}(\cdot)$ stated in Lemma \ref{l:convexity} and Jensen's
inequality; and finally (\ref{conv_eq_7}) follows since the energy management
policy has to satisfy the average power constraint and $\mathcal{I}(\cdot)$ is
a non-increasing function of its argument.
\end{subequations}
\end{proof}

\begin{remark}
The achievability part of the proof reveals that the optimal energy management
policy is memoryless; that is, it can be achieved by simply looking at the
instantaneous input load, and generating the output load randomly using the
optimal conditional probability, which simplifies the operation of the EMU
significantly. This results in a stochastic energy management policy rather
than a deterministic one.
\end{remark}

We note here that the same performance in Theorem \ref{t:EhTp_function} can
also be achieved by a deterministic block-based energy management policy if
the user knew all the future energy demands over a block of $n$ time instants.

We also note the similarity between the privacy-power function in
(\ref{ProblemF}) and the classical rate-distortion function \cite{Cover:book}.
The characterization of the privacy-power function for a multi-user SM system
is equivalent to the rate-distortion function for a vector source with a
difference distortion measure%
\begin{equation}
d(\mathbf{x}, \mathbf{y})=\left\{
\begin{array}
[c]{ll}%
\sum_{i=1}^{N}x_{i}-y_{i}, & \text{if }y_{i}\leq x_{i},\text{ }\forall i\\
\infty, & \text{otherwise.}%
\end{array}
\right.
\end{equation}
However, despite the similarity between the expressions of the rate-distortion
and the privacy-power functions, their operational definitions are quite
different. In the case of lossy source compression, there is an encoder and a
decoder and the rate-distortion function characterizes the minimum number of
bits per sample that the encoder should send to the decoder, such that the
decoder can reconstruct the source sequence within the specified average
distortion level. In lossy source compression, the encoder observes the whole
block of $n$ source samples, and maps them to an index from the compression
codebook, which is agreed upon in advance.

There are major differences between the two problems. In the SM privacy
problem, there is neither an agreed codebook nor a digital interface. Here
$\mathbf{Y}^{n}$ is the direct output of the \textquotedblleft
encoder\textquotedblright, rather than the reconstruction of the decoder based
on the transmitted index. The EMU does not operate over blocks of input load
realizations; instead, the output load is decided instantaneously based on the
previous input and output loads. Similarly, in the SM privacy problem, there
is no encoder or decoder either, although the EMU can be considered as an
encoder and $\mathbf{Y}^{n}$ as the reconstruction of the input load
$\mathbf{X}^{n}$. However, the ``distortion'' constraint between the input and
output loads in the SM privacy problem stems from the constraint on the
available power that the AES can generate, rather than the limited rate of
encoding as in the rate - distortion problem.

Having clarified the distinctions between the privacy-power and
rate-distortion functions, we also remark the differences between our
formulation of the SM privacy problem and the \textit{privacy-utility
framework} studied in \cite{Sankar:SmartGrid:TSG:13}. In our privacy
model the SM readings are not tempered, and thus, the SMs report the exact amount of energy
received from the grid. On the other hand, in \cite{Sankar:SmartGrid:TSG:13}, the SM readings are
considered as the samples of an information source, which are compressed
before being forwarded to the UP in order to hide their real values; and
hence, privacy is achieved at the expense of distorting the SM measurements. The
distortion constraint in \cite{Sankar:SmartGrid:TSG:13} is explicit and
measures the utility of the compressed SM samples.

If the users' input loads are independent from each other, but not necessarily
identically distributed, the multi-user privacy-power function in
(\ref{ProblemF}) simplifies further. The following chain of inequalities lower
bound the privacy-power function under this assumption:
\begin{subequations}
\begin{align}
I(\mathbf{X};\mathbf{Y})  &  =\sum_{i=1}^{N}H(X_{i})-H\left(  X_{i}%
|X^{i-1},Y^{N}\right)  ,\\
&  \geq\underset{i=1}{\overset{N}{\sum}}H\left(  X_{i}\right)  -\underset
{i=1}{\overset{N}{\sum}}H\left(  X_{i}|Y_{i}\right)  ,\label{indep}\\
&  =\underset{i=1}{\overset{N}{\sum}}I\left(  X_{i};Y_{i}\right)  ,\\
&  \geq\sum_{i=1}^{N}\mathcal{I}_{X_{i}}\left(  P_{i}\right)  ,
\label{single_apliances}%
\end{align}
where we have defined $P_{i}=\mathds{E}[X_i-Y_i]$, and $\mathcal{I}_{X_{i}%
}(\cdot)$ denotes the privacy power function for a system with an input load
distribution $f_{X_{i}}(x_{i})$. We can achieve equality in (\ref{indep}) with
independent EMU policies for individual users, $f_{\mathbf{Y}|\mathbf{X}}
(\mathbf{y}|\mathbf{x}) = \prod_{i}^{N}f_{Y_{i}|X_{i}}(y_{i} |x_{i})$. Consequently, we can achieve equality in (\ref{single_apliances}) by using the single user
optimal energy management policy for each of the input loads separately, while
satisfying the total average power constraint, $\sum_{i=1}^{N}P_{i}\leq P$.


Following the above arguments, the problem of characterizing the optimal
privacy-power function for a multi-user SM system is reduced to the following
optimization problem
\end{subequations}
\begin{equation}
\mathcal{I}(P)=\underset{\sum_{i=1}^{N}P_{i}\leq P}{\inf}\sum_{i=1}%
^{N}\mathcal{I}_{\mathsf{X}_{i}}\left(  P_{i}\right)  .
\label{independentapliances}%
\end{equation}



In the following sections, we use the information theoretic single-letter
characterization of the privacy-power function in order to obtain either
closed-form solutions or numerical algorithms that give us the optimal energy
management policies in multi-user SM systems with certain input load
distributions and an average power constraint on the AES.

\section{Discrete Input Loads}

\label{sec:discrete}

In the previous section we have characterized the privacy-power function for
i.i.d. input loads as an optimization problem in a single-letter format in
(\ref{ProblemF}). Now we will show that this problem can always be efficiently
solved for any discrete input load distribution. In addition, for the
particular case where all the users have binary input loads, we give a
closed-form expression for the privacy-power function.

For discrete input and output alphabets, the characterization of the privacy-power function $\mathcal{I}(\mathcal{P})$ in (\ref{ProblemF}) is a convex
optimization problem since the mutual information is a convex function of the
conditional probabilities, $f_{\mathbf{Y|X}}(\mathbf{y}|\mathbf{x})$, for
$\mathbf{y}\in\mathcal{Y}^{N}$, $\mathbf{x}\in\mathcal{X}^{N}$, and the
constraints are linear. Then, (\ref{ProblemF}) can be solved numerically,
e.g., by the efficient Blahut-Arimoto (BA) algorithm \cite{Cover:book}.
However, while the input load alphabet, defined by the system based on the
energy demand profiles of the users, can be discrete, the output load alphabet
is not necessarily discrete, and the output load, in general ,can take any real value. The next
theorem shows that for discrete input load alphabets, the output load alphabet
can be constrained to the input alphabet without loss of
optimally, i.e., $\mathcal{Y=X}$, and consequently, for any given discrete input alphabet the privacy-power function can always be computed efficiently. This result is only valid
for i.i.d. input loads, but does not require users' input loads to be
independent from each other.

\begin{theorem}
\label{Discrite_Output_Input_Alphabet} Without loss of optimality, for
discrete input load alphabets, the output load alphabet $\mathcal{Y}^{N}$ can
be constrained to the input load alphabet, i.e., $\mathcal{Y}^{N}%
\mathcal{=X}^{N}$.
\end{theorem}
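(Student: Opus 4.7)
The plan is to show that any feasible conditional distribution $f_{\mathbf{Y}|\mathbf{X}}$ with output support outside $\mathcal{X}^N$ can be replaced by one with output support inside $\mathcal{X}^N$ without increasing either the mutual information or the AES power consumption. Because the problem in (\ref{ProblemF}) is an infimum over such distributions, this suffices.

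First, I would construct a deterministic coordinate-wise quantizer $g:\mathcal{Y}^N\to\mathcal{X}^N$ that rounds each $y_i$ \emph{upward} to the nearest input-alphabet symbol. Explicitly, define
\begin{equation}
g_i(y_i)\triangleq\min\{x\in\mathcal{X}_i:x\geq y_i\},\qquad i=1,\dots,N,
\end{equation}
and let $\mathbf{Y}'\triangleq g(\mathbf{Y})=(g_1(Y_1),\dots,g_N(Y_N))$. The rounding must be upward (not to the closest point) because the AES power must not increase; rounding down would violate the instantaneous constraint $Y'_i\leq X_i$ only if we are careless, but more importantly would increase $X_i-Y'_i$.

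Next I would verify the three required properties. (i) The minimum in the definition of $g_i$ is well defined: since the feasibility constraint gives $Y_i\leq X_i$ and $X_i\in\mathcal{X}_i$, the set $\{x\in\mathcal{X}_i:x\geq Y_i\}$ is nonempty, with $X_i$ itself as a witness. (ii) Feasibility is preserved: by construction $Y'_i\geq Y_i\geq 0$, and $Y'_i\leq X_i$ since $X_i$ lies in the set over which the minimum is taken, so $0\leq Y'_i\leq X_i$ holds almost surely. (iii) The average power constraint still holds: coordinate-wise $X_i-Y'_i\leq X_i-Y_i$, hence
\begin{equation}
\mathds{E}\!\left[\sum_{i=1}^N(X_i-Y'_i)\right]\leq\mathds{E}\!\left[\sum_{i=1}^N(X_i-Y_i)\right]\leq P.
\end{equation}

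Finally, since $g$ is a deterministic function of $\mathbf{Y}$ alone, the Markov chain $\mathbf{X}\to\mathbf{Y}\to\mathbf{Y}'$ holds, and the data-processing inequality yields $I(\mathbf{X};\mathbf{Y}')\leq I(\mathbf{X};\mathbf{Y})$. Consequently, restricting the optimization in (\ref{ProblemF}) to distributions with $\mathcal{Y}^N\subseteq\mathcal{X}^N$ does not change the infimum, which is precisely the claim. There is no real obstacle here; the only point that must be handled with care is the direction of rounding, which is dictated by the asymmetric constraint $Y_i\leq X_i$ together with the requirement that the AES power budget be respected. Note also that the argument requires only the i.i.d.-in-time assumption of Theorem~\ref{t:EhTp_function} and makes no use of independence across users, matching the theorem's stated generality.
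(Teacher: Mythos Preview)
Your proof is correct and follows essentially the same approach as the paper: both round each output coordinate upward to the nearest input-alphabet point via $g_i(y_i)=\min\{x\in\mathcal{X}_i:x\geq y_i\}$, verify feasibility and that the AES power does not increase, and then invoke the data-processing inequality through the Markov chain $\mathbf{X}\to\mathbf{Y}\to\mathbf{Y}'$. Your presentation is in fact slightly tidier, explicitly noting that $X_i$ itself witnesses the nonemptiness of the set defining $g_i$.
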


\begin{proof}
Let the discrete input load alphabets for each user be defined as a possibly
infinite set
\[
\mathcal{X}_{i}=\{x_{i,1},...,x_{i,m_{i}}:x_{i,j}<x_{i,j+1}\},
\]
where $m_{i}=+\infty$ if the input alphabet is countably infinite.
Define $\mathcal{X}_{i}^{C}$ as the set of non-negative real numbers that are
not in the input load alphabet for each user $i$.
For any vector $\mathbf{x}=[x_{1},....,x_{N}]\in\mathcal{X}^{N}$ define the
set
\[
\Omega\left(  \mathbf{x}\right)  \triangleq(x_{1}^{-},x_{1}]\times\cdots
\times(x_{N}^{-},x_{N}]
\]
where $\times$ denotes the Cartesian product and $x_{i}^{-}=\max\left\{
x\in\left\{  0,\mathcal{X}_{i}\right\}  :x<x_{i}\right\} $. Now assume that
the optimal privacy-power function in (\ref{ProblemF}) is achieved by the
conditional probability distribution $f_{\mathbf{Y|X}}(\mathbf{y}|\mathbf{x}%
)$, which might take positive values for some $y_{i}\in\mathcal{X}_{i}^{C}$.
We define the following new conditional probability distribution:%
\[
f_{\mathbf{\hat{Y}|X}}(\mathbf{\hat{y}}|\mathbf{x})=\left\{
\begin{array}
[c]{ll}%
0, & \text{if }\exists i:\hat{y}_{i}\in\mathcal{X}_{i}^{C},\\
\int_{\Omega\left(  \mathbf{\hat{y}}\right)  }f_{\mathbf{Y|X}}(\mathbf{y}%
|\mathbf{x})d\mathbf{y}, & \text{if }\hat{y}_{i}\in\mathcal{X}_{i},~\forall i.
\end{array}
\right.
\]

The new conditional probability function does not allow any output value in
$\mathcal{X}_{i}^{C}$ for any $i$, i.e., the output alphabet is limited to the
input alphabet. Instead, any output vector $\mathbf{y}=[y_{1},\ldots,y_{N}]$,
which has a non-zero probability according to $f_{\mathbf{Y|X}}(\mathbf{y}%
|\mathbf{x})$, is assigned to a new output vector $[\hat{y}_{1},\ldots,\hat
{y}_{N}]$ such that
\begin{equation}
\label{eq:y_hat_y}\hat{y}_{i}=\min\{x\in\mathcal{X}_{i}:x\geq y_{i}\}.
\end{equation}
Notice that the energy management policy, $f_{\mathbf{\hat{Y}|X}}%
(\mathbf{\hat{y}}|\mathbf{x})$, is still feasible since the output load, at
any time instant, is still less than what is requested by the appliances, i.e.,
$\hat{y}_{i} \leq x_{i}, ~\forall i$. Moreover, with this new
conditional distribution the power load demanded from the AES can only have a
smaller average value compared to the original energy management policy, since
the output load is not reduced for any input load value. Thus, it only remains
to show that the new conditional distribution leaks at most the same amount of
information to the UP. Notice that the new output load $\mathbf{\hat{Y}}$ is a
deterministic function of $\mathbf{Y}$ define in (\ref{eq:y_hat_y}). Hence,
from the information processing inequality, we have that $\mathbf{X} -
\mathbf{Y} - \mathbf{\hat{Y}}$ form a Markov chain, and consequently,
$I(\mathbf{X},\mathbf{Y})\geq I(\mathbf{X},\mathbf{\hat{Y}})$, which completes
the proof.
\end{proof}

\subsection{Binary Input Loads}

The simplest discrete input load model we can consider is a binary input
alphabet with independent Bernoulli input load distributions for all the
users, i.e., $X_{i}\sim\mathbf{Ber}(p_{i})$, where $p_{i}=p_{X_{i}}(L_{i})$
and $\mathcal{X}_{i}=\left\{  L_{i},H_{i}\right\}  $ for $i=1,...,N$. Observe
that the average power required by the $i-$th user is given by $P_{X_{i}%
}=L_{i}+\Delta_{i}\left(  1-p_{i}\right)  $, where $\Delta_{i}=H_{i}-L_{i}$.
This power consumption model corresponds to a scenario in which the users, at
each time instant, require either a constant high power load level $H_{i}$, or
a constant low power load level $L_{i}$, i.e., the standby power consumption
level. When there is a power demand, the EMU fulfills this demand either
obtaining the energy from the UP, or from the AES according to $p_{\mathbf{Y}%
|\mathbf{X}}$.

From Theorem \ref{Discrite_Output_Input_Alphabet}, the optimal output
distribution $\mathcal{Y}_{i}$ is also binary for all $i$. Hence, the power
allocated from the AES to each user is a binary random variable over the set
$\left\{  0,\Delta_{i}\right\}  $. Note that, since we require $Y_{i} \leq
X_{i}$, we can only provide energy from the AES to user $i$ if $X_{i}%
(t)=H_{i}$ and $Y_{i}(t)=L_{i}$, and consequently, $p_{X_{i}Y_{i}}(L_{i}%
,H_{i})=0$ and $p_{X_{i}Y_{i}}(L_{i},L_{i})=p_{X_{i}}(L_{i})=p_{i}$. The
energy obtained from the AES is then directly related to $p_{X_{i}Y_{i}}%
(H_{i},L_{i})$ by $P_{i}=\Delta_{i}p_{X_{i}Y_{i}}(H_{i},L_{i}),$ and we can
express the mutual information $I\left(  X_{i};Y_{i}\right)  $ for the
bivariate binary distribution
\[
p_{X_{i}Y_{i}}=\left[
\begin{array}
[c]{cc}%
p_{i} & 0\\
\frac{P_{i}}{\Delta_{i}} & 1-p_{i}-\frac{P_{i}}{\Delta_{i}}%
\end{array}
\right]  ,
\]
as a function of $P_{i}$ as follows:%
\begin{multline*}
I_{\mathsf{B}_{i}}\left(  P_{i}\right)  =\frac{P_{i}}{\Delta_{i}}\log
_{2}\left(  \frac{P_{i}}{\Delta_{i}}\right)  -\left(  p_{i}+\frac{P_{i}%
}{\Delta_{i}}\right)  \log_{2}\left(  p_{i}+\frac{P_{i}}{\Delta_{i}}\right)
\\-\left(  1-p_{i}\right)  \log_{2}\left(  1-p_{i}\right)  .
\end{multline*}
Observe that $I_{\mathsf{B}_{i}}\left(  P_{i}\right)  $ is a monotonically
decreasing function of $P_{i}$, and $I_{\mathsf{B}_{i}}(\Delta_{i}\left(
1-p_{i}\right)  )=0$. Consequently, the privacy-power function for the binary
model for a single user is given by
\begin{equation}
\mathcal{I}_{\mathsf{B}_{i}}(P_{i})=\left(  I_{\mathsf{B}_{i}}\left(
P_{i}\right)  \right)  ^{+},
\end{equation}
where $(x)^{+}=\max(x,0)$.

By particularizing (\ref{independentapliances}) with $\mathcal{I}%
_{\mathsf{X}_{i}}(P_{i})=\mathcal{I}_{\mathsf{B}_{i}}(P_{i})$ for all $i$, and
solving the resultant problem, we find the optimal power allocation
$P_{i}^{\ast}$ as%
\begin{equation}
P_{i}^{\ast}=\left\{
\begin{array}
[c]{ll}%
\Delta_{i} p_{i} \frac{1-p_{\Delta_{i}}}{p_{\Delta_{i}}} & \text{if }%
p_{i}<p_{\Delta_{i}},\\
\Delta_{i} (1-p_{i}) & \text{otherwise},
\end{array}
\right.
\end{equation}
where $p_{\Delta_{i}}\left(  \lambda\right)  =1-e^{-\lambda\Delta_{i}}$, and
$\lambda$ is chosen such that $\sum_{i=1}^{N}P_{i}^{\ast}=P$ . Note that
$p_{\Delta_{i}}$ satisfies $0\leq p_{\Delta_{i}}\leq1$. Then, the
privacy-power function for the multiple users with independent binary input
load distributions is given by
\begin{align}
\mathcal{I}_{\mathbf{B}}(P)  &  =\sum_{i=1}^{N}\mathcal{I}_{\mathsf{B}_{i}%
}(P_{i}^{\ast}),\\
&  =\sum_{i=1}^{N}\left(  H_{\mathsf{B}}(p_{i})-\frac{p_{i}}{p_{\Delta_{i}}%
}H_{\mathsf{B}}(p_{\Delta_{i}})\right)  ^{+},
\end{align}
where $H_{\mathsf{B}}(p)$ denotes the entropy of a $\mathbf{Ber}(p)$ distribution.

Each user can achieve full privacy $\mathcal{I}_{\mathsf{B}_{i}}(P_{i}^{\ast
})=0$ by obtaining an average power of $P_{X_{i}}-L_{i} = \Delta_{i}
(1-p_{i})$ from the AES, the remaining power $L_{i}$ is obtained from the grid
without incurring any lost of privacy. However, if the average power obtained
from the AES is below $P_{X_{i}}-L_{i}$ then the energy obtained from the grid
comes at the expense of a loss in privacy. Note that $P_{i}^{\ast}$ and
$\mathcal{I}_{\mathbf{B}}(P)$ depend on the input load parameters $P_{X_{i}},$
$L_{i}$, $\Delta_{i}$, and $p_{i}$ in a non-straightforward manner. We
postpone the detailed analysis of this privacy-power function to Section
\ref{sec:num}.

\section{Continuous Input Loads}

\label{sec:continous}

For continuous input loads, the optimal output alphabet is also continuous.
Consequently, efficient algorithms, such as the BA algorithm, do not yield the
optimal solution to (\ref{ProblemF}). In this case, we provide a lower bound
on the privacy-power function by using the Shannon lower bound. We then
show that this lower bound is achievable when the users have independent
exponentially distributed input loads.

Using the SLB \cite{Cover:book}, for any input load distribution, we have
\begin{equation}
\mathcal{I}_{X_{i}}(P_{i})\geq\left(  \mathsf{h}(X_{i})-\ln\left(
P_{i}\right)  \right)  ^{+}\text{ nats}, \label{SLB}%
\end{equation}
where $\mathsf{h}(X)$ denotes the differential entropy of the continuous
random variable $X$. Observe that,
\begin{subequations}
\label{Exp_lower}%
\begin{align}
I(X_{i},Y_{i})  &  =\mathsf{h}(X_{i})-\mathsf{h}(X_{i}|Y_{i}%
),\label{Exp_lower_eq1}\\
&  =\mathsf{h}(X_{i})-\mathsf{h}(X_{i}-Y_{i}|Y_{i}),\label{Exp_lower_eq2}\\
&  \geq\mathsf{h}(X_{i})-\mathsf{h}(X_{i}-Y_{i}),\label{Exp_lower_eq3}\\
&  \geq\mathsf{h}(X_{i})-\mathsf{h}(\mathsf{Exp}(\mathds{E}\left[  X_{i}%
-Y_{i}\right]  )),\label{Exp_lower_eq5}\\
&  =\mathsf{h}(X_{i})-\ln\left(  P_{i}\right)  , \label{Exp_lower_eq7}%
\end{align}
where we have used $\mathsf{Exp}(\lambda)$ to denote an exponential random
variable with mean $\lambda$. In the above chain of inequalities,
(\ref{Exp_lower_eq3}) follows as conditioning reduces entropy, and
(\ref{Exp_lower_eq5}) follows since exponential distribution maximizes the
entropy among all nonnegative distributions with a given mean value
\cite{Cover:book}.

Next, we present the necessary and sufficient conditions for any piecewise
continuous input load distribution $f_{X}(x)$ to achieve the SLB, together
with the conditional probability distribution $f_{Y|X}(y|x)$ achieving it. We
denote by $u(x)$, the unit step function which assigns $0$ for $x<0$, and $1$
for $x\geq0$. The Dirac delta function is denoted by $\delta(x)$. We use
$f^{\prime}(x)$ to denote the first order derivative of $f(x)$ and
$f(x_{i}^{+})=\underset{x\rightarrow x_{i}^{+}}{\lim}$ $f(x)$ and $f(x_{i}%
^{-})=\underset{x\rightarrow x_{i}^{-}}{\lim}$ $f(x)$ and $x\rightarrow
x_{i}^{+}$ and $x\rightarrow x_{i}^{-}$ mean that $x\rightarrow x_{i}$ from
the left and right, respectively. Finally, we define $\Delta_{f}%
(x_{i})=f(x_{i}^{+})-f(x_{i}^{-})$.
\end{subequations}
\begin{theorem}
\label{ThCond_Bounded}Suppose that the input load distribution $f_{X}(x)$ is
continuous on $\mathcal{R}_{+}$ except for a countable number of jump
discontinuities or non-differentiable points $\mathcal{X}_{D}=\left\{
x_{1},...,x_{D}\right\}  $. Then, the SLB (\ref{SLB}) is achieved for all $P$
satisfying $g_{Y}(y)\geq0$, $\forall y\in\mathcal{R}_{+}$,
where
\begin{equation}
g_{Y}(y)=g_{Y_{C}}(y)+g_{Y_{D}}(y) \label{f_Y_T}%
\end{equation}
is a mixture of a continuous and a discrete function specified as follows:
\begin{align*}
g_{Y_{C}}(y)  &  =f_{X}(y)+\mathds{E}[V]f_{X}^{\prime}(y),\text{ }%
y\in\mathcal{R}_{+}/\mathcal{X}_{D},\\
g_{Y_{D}}(y)  &  =\mathds{E}[V]\sum_{i=0}^{D}\Delta_{X}(x_{i})\delta
(y-x_{i}),\text{ }y\in\mathcal{X}_{D}.
\end{align*}
For all $P$, at which the SLB is achieved, the output distribution is given by $f_Y(y)=g_Y(y)$ and the
optimal conditional output load distribution reads $f_{Y|X}(y|x)=f_{V}%
(x-y)\frac{f_{Y}(y)}{f_{X}(x)}$ where $f_{V}(v)=\frac{1}{\mathds{E}[V]}%
e^{-\frac{v}{\mathds{E}[V]}}u(v)$.
\end{theorem}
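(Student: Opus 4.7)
My strategy is to read off the equality conditions from the SLB chain (\ref{Exp_lower_eq1})--(\ref{Exp_lower_eq7}) and then invert the resulting structural constraint to recover $f_Y$ from $f_X$ via deconvolution. With $V := X - Y$, equality in step (\ref{Exp_lower_eq3}) (conditioning reduces entropy) holds iff $V$ is independent of $Y$, and equality in step (\ref{Exp_lower_eq5}) (maximum-entropy bound among nonnegative variables with fixed mean) holds iff $V \sim \mathsf{Exp}(\mathds{E}[V])$. Since $\mathcal{I}_{X_i}$ is non-increasing in $P$ (Lemma \ref{l:convexity}), the bound is tightest when the power constraint is active, so the tightest choice is $\mathds{E}[V] = P$. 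Consequently, achievability of the SLB at power level $P$ reduces to the existence of a joint law with $X = Y + V$, $V \perp Y$, $V \sim \mathsf{Exp}(P)$, and $Y \geq 0$ almost surely (non-negativity of $V$ is automatic).

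Next I would turn this into a deconvolution problem. Independence together with $X = Y + V$ forces $f_X = f_Y * f_V$, i.e.,
\begin{equation*}
f_X(x) = \frac{1}{P}\int_{0}^{x} f_Y(y)\, e^{-(x-y)/P}\, dy.
\end{equation*}
Multiplying by $e^{x/P}$ and differentiating in $x$ yields $f_Y(x) = f_X(x) + P f_X'(x)$ wherever $f_X$ is differentiable, matching $g_{Y_C}$; equivalently, $\mathcal{L}\{f_Y\}(s) = (1+Ps)\mathcal{L}\{f_X\}(s)$ on the Laplace side, so $f_Y$ is uniquely determined by $f_X$. At each jump $x_i \in \mathcal{X}_D$, the distributional derivative of $f_X$ contains a Dirac mass $\Delta_X(x_i)\,\delta(\cdot - x_i)$; multiplying by $\mathds{E}[V] = P$ and summing over $i$ produces exactly $g_{Y_D}$. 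Thus the unique solution of the deconvolution equation in the space of signed measures is $g_Y = g_{Y_C} + g_{Y_D}$, matching the theorem statement.

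It remains to establish the equivalence with the condition $g_Y \geq 0$. A direct computation shows $g_Y$ has total mass one (the atoms in $g_{Y_D}$ exactly cancel the boundary terms one picks up when integrating $P f_X'$ piecewise across the jumps in $\mathcal{X}_D$), so $g_Y$ is a genuine probability measure on $\mathcal{R}_+$ iff it is pointwise non-negative. When $g_Y(y) \geq 0$ for all $y$, setting $f_Y = g_Y$ and drawing $V \sim \mathsf{Exp}(P)$ independently of $Y$ yields a joint law with $f_{X,Y}(x,y) = f_Y(y)\,f_V(x-y)$, giving the stated $f_{Y|X}(y|x) = f_V(x-y)\,f_Y(y)/f_X(x)$; by construction the chain (\ref{Exp_lower_eq1})--(\ref{Exp_lower_eq7}) then collapses to equality and the SLB is attained. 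If instead $g_Y$ is negative somewhere, no such joint law exists and the SLB is strictly loose.

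The main obstacle will be the distributional calculus around $\mathcal{X}_D$: rigorously differentiating $f_X$ piecewise, accounting for a possible discontinuity at the origin (which would contribute a Dirac atom at $y=0$ and affect the total-mass computation), and confirming via the Laplace representation that $g_Y$ is the unique non-negative solution of the convolution equation whenever one exists. Once this book-keeping is in place, the remainder of the argument is a direct translation of the equality conditions of the first step into the formulas stated in the theorem.
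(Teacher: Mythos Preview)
Your proposal is correct and follows essentially the same approach as the paper: identify the equality conditions in the SLB chain (independence of $V=X-Y$ and $Y$, with $V\sim\mathsf{Exp}(P)$), deconvolve $f_X=f_Y*f_V$ via the Laplace relation $\mathcal{L}f_Y(s)=(1+Ps)\mathcal{L}f_X(s)$ to obtain $g_Y$, and then observe that $g_Y$ integrates to one so that non-negativity is the only remaining feasibility condition. Your write-up is in fact more explicit than the paper's about the distributional derivative at the jump points and about the ``only if'' direction, but the underlying argument is the same.
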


\begin{proof}
To show this results, we need to find the conditional distribution
$f_{Y|X}(y|x)$ that satisfies the SLB with equality \cite{Cover:book}. We
require the random variables $V=X-Y$ and $Y$ to be independent, and $V$ to be
distributed according to an exponential distribution $V\sim\mathsf{Exp}(P)$
with mean $P$. We first obtain the output distribution $f_{Y}(y)$ from its
Laplace transform $\mathcal{L}f_{Y}(s)=\mathcal{L}(f_{Y}(y))(s)$ as
\begin{align*}
\mathcal{L}f_{Y}(s) &  =\frac{\mathcal{L}f_{X}(s)}{\mathcal{L}f_{V}(s)},\\
&  =\mathcal{L}f_{X}(s)\left(  1+\mathds{E}[V]s\right)  .
\end{align*}
Then, it follows that $f_{Y}(y)$ is given by (\ref{f_Y_T}). The conditional
distribution $f_{Y|X}(y|x)$ is obtained using the fact that $f_{X|Y}%
(x|y)=f_{V}(x-y)$. Finally, it can be shown that $\int_{0}^{\infty}%
f_{Y}(y)dy=1$; and thus, the achievability is guaranteed by requiring
$f_{Y}(y)\geq0$, $\forall y\in\mathcal{R}^{+}$.
\end{proof}

\begin{remark}
If the achievability condition in Theorem \ref{ThCond_Bounded} is satisfied
for a given $P_{\max}$, it is satisfied at any $P\leq P_{\max}$. Then it
follows that, there is a unique critical average power level, $P_{0}$, such
that $\mathcal{I}_{X}(P)=\mathsf{h}(X)-\ln\left(  P\right)  $ for all $P\leq
P_{0}$ and $\mathcal{I}_{X}(P)>\mathsf{h}(X)-\ln\left(  P\right)  $ for all
$P>P_{0}$.
\end{remark}

To find a lowerbound on the privacy-power function in the case of multiple
users with continuous input load distributions, we replace $\mathcal{I}%
_{X_{i}}(P_{i})$ with $\left(  \mathsf{h}(X_{i})-\ln\left(  P_{i}\right)
\right)  ^{+}$ in (\ref{independentapliances}), and find the corresponding
optimal power allocation $P_{i}^{\ast}$ as
\begin{equation}
P_{i}^{\ast}=\left\{
\begin{array}
[c]{cl}%
\lambda, & \text{if }e^{\mathsf{h}(X_{i})}>\lambda,\\
e^{\text{ }\mathsf{h}(X_{i})}, & \text{otherwise},
\end{array}
\right.
\end{equation}
where $\lambda$ is chosen such that $\sum_{i=1}^{N}P_{i}^{\ast}=P$. Then the
privacy-power function for multiple users can be lower-bounded by
\begin{equation}
\mathcal{I}_{\mathbf{X}}(P)\geq\sum_{i=1}^{N}\left(  \mathsf{h}(X_{i}%
)-\ln\left(  \lambda\right)  \right)  ^{+}\text{ nats}.
\end{equation}

\subsection{Exponential Input Loads}

\psfrag{l1}{$\lambda_1$} \psfrag{l2}{$\lambda_2$} \psfrag{l3}{$\lambda_3$}
\psfrag{la}{$\lambda$} \begin{figure}[t]
\centering\includegraphics[width=2in]{./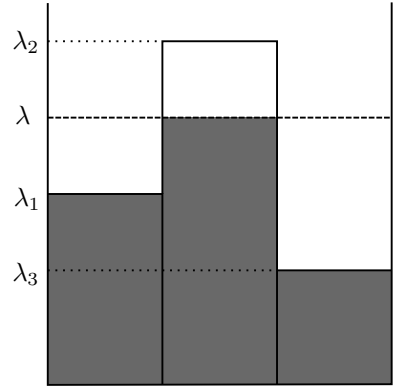}\caption{The
reverse waterfilling solution for the optimal power provided to each user from
the AES.}%
\label{fig:RWF}%
\end{figure}

For an exponential input load distribution with mean $\lambda_{i}$, i.e.,
$X_{i}\sim\mathsf{Exp}\left(  \lambda_{i}\right)  $, the SLB in (\ref{SLB}) is
achievable by using the conditional distribution \cite{Gomez:ISIT:13}
\[
f_{Y_{i}|X_{i}}\left(  y|x\right)  =\frac{\lambda_{i}}{P_{i}} e^{-\frac
{(x-y)}{P_{_{i}}}}e^{\frac{x}{\lambda_{i} }}f_{Y_{i}}(y),
\]
where $f_{Y_{i}}$ is a mixture of a continuous and a discrete distribution
specified by%
\[
f_{Y_{i}}(y)=\left(  1-\frac{P_{_{i}}}{\lambda_{i}}\right)  \frac{1}%
{\lambda_{i}}e^{-\frac{y}{\lambda_{i}}}+\frac{P_{_{i}}}{\lambda_{i}}%
\delta(y).
\]
Then the privacy-power function
for a single user with an exponential input load with mean $\lambda_{i}$ can
be explicitly characterized as follows:
\begin{equation}
\mathcal{I}_{\mathsf{E}_{i}}(P_{_{i}})= \left\{
\begin{array}
[c]{cc}%
\ln\left(  \frac{\lambda_{i}}{P_{i}} \right)  , & \text{if } P_{i} \leq
\lambda_{i},\\
0, & \text{otherwise}.
\end{array}
\right.
\end{equation}

By particularizing (\ref{independentapliances}) with $\mathcal{I}%
_{\mathsf{X}_{i}}(P_{i})=\mathcal{I}_{\mathsf{E}_{i}}(P_{_{i}})$ for all $i$,
and solving the resultant problem, we find the optimal AES power allocation
among users, $P_{i}^{\ast}$, as the well-known reverse waterfilling solution
\[
P_{i}^{\ast}=\left\{
\begin{array}
[c]{cc}%
\lambda, & \text{if }\lambda<\lambda_{i},\\
\lambda_{i}, & \text{if }\lambda\geq\lambda_{i},
\end{array}
\right.
\]
where $\lambda$ is chosen such that $\sum_{i=1}^{N}P_{_{i}}^{\ast}=P$.

The reverse waterfilling power allocation is illustrated in Fig. \ref{fig:RWF}
for three users with independent exponentially distributed energy demands with
means $\lambda_{1}, \lambda_{2}$, and $\lambda_{3}$, respectively. The optimal
reverse water level is given by $\lambda$, where the heights of the shaded
areas in the figure correspond to the average AES powers allocated to the different
users. We observe that the optimal energy management policy satisfies all the
energy demands of the users whose average input load is below $\lambda$,
directly from the AES. Hence, no information is leaked to the UP about the
energy consumption of these users; user 1 and user 3 in the figure. The rest
of the users receive exactly the same amount of power $\lambda$ from the AES,
and the remainder of their energy demand is satisfied from the grid. Finally,
the privacy-power function for multiple users with exponential input loads can
be expressed as
\begin{align}
\mathcal{I}_{\mathbf{E}}(P) =\sum_{i=1}^{N}\left( \ln\left(  \frac
{\lambda_{i}}{\lambda}\right)  \right)  ^{+}.
\end{align}

\section{Numerical Results}

\label{sec:num}

\begin{figure}[ptb]
\centering
\psfrag{Y}{\footnotesize$\mathcal{I}(P)$ bits}
\psfrag{X}{\footnotesize$P$}
\includegraphics[width=3in]{./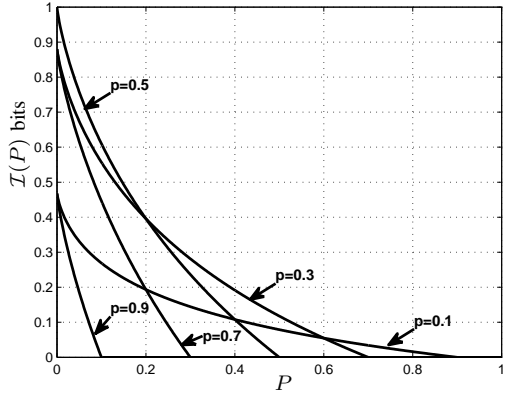}\caption{Privacy-power
function for a binary input-output system with different $p$ values.}%
\label{f:binary}%
\end{figure}

\begin{figure}[ptb]
\centering
\psfrag{I(E)}{\footnotesize$\mathcal{I}(P)$ bits}
\psfrag{E}{\footnotesize$P$}
\includegraphics[width=3in]{./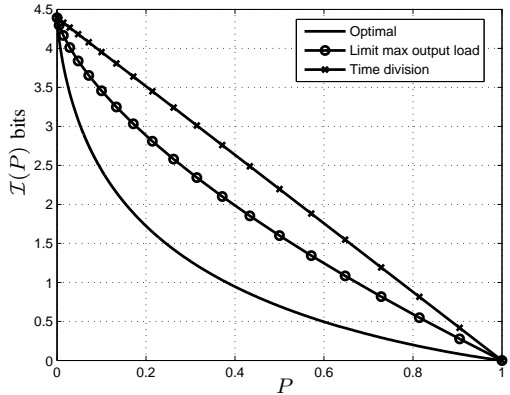}\caption{Privacy-power
function for a uniform input load, and different EMU policies.}%
\label{f:uniform}%
\end{figure}

\begin{figure}[t]
\psfrag{Y}{\footnotesize$\mathcal{I}_i(P)$ bits}
\psfrag{P}{\footnotesize$P$}
\centering\includegraphics[width=3in]{./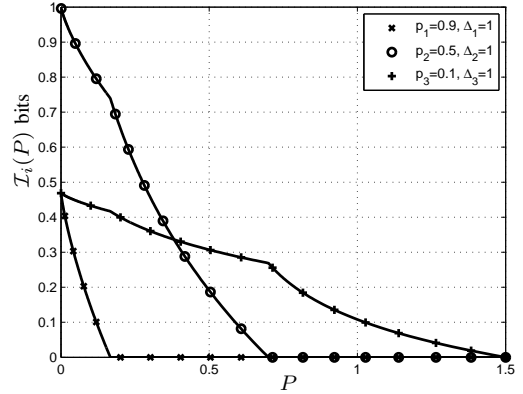}\caption{Individual privacy
achieved by three users $\mathcal{I}_{\mathsf{X}_{i}}(P)$, $i=\{1,2,3\}$, all with the same input load alphabet $\delta_i=1$ $i=\{1,2,3\}$, but with different input load distributions as a function of the average AES power $P$.}%
\label{fig:bin_I2}%
\end{figure}

In this section we numerically analyze the privacy-power function in a SM
system with various input load distributions and number of users, by
explicitly evaluating the information theoretic optimal leakage rate expressions.

\subsection{Single User Scenario}

In order to illustrate the behaviour of the privacy-power function for a
simple binary input load system, we first consider a single user with an input
load alphabet $\mathcal{X}=\mathcal{Y}=\{0,1\}$, and $p_{X}(0)=p$. We plot the
$\mathcal{I}(P)$ function for the binary input load in Fig. \ref{f:binary} for
different $p$ values. As expected, the required average power from the AES is
maximum when the user wants perfect privacy, and it is zero when no privacy is
required. We also observe that the privacy-power function is
decreasing in $P$ and convex. Another interesting observation from the figure
is the fact that the $\mathcal{I}(P)$ curves for two different input load
distributions, i.e., different $p$ values, might intersect. This means that,
to achieve the same level of privacy a lighter input load might require lower
or higher average power than a heavier input load. Also note that the two
different input load distributions, say $p=0.1$ and $p=0.9$, have the same
level of privacy when there is no AES in the system; however, the input load
with lower average energy demand, i.e., the one with $p=0.9$, achieves perfect
privacy with a much lower $P$ value.

\begin{figure}[t]
\psfrag{Y}{\footnotesize$\mathcal{I}_i(P)$ bits}
\psfrag{P}{\footnotesize$P$}
\centering\includegraphics[width=3in]{./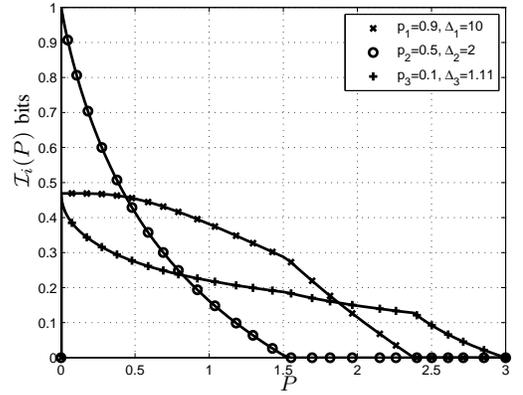}\caption{Individual privacy
achieved by three users $\mathcal{I}_{\mathsf{X}_{i}}(P)$, $i=\{1,2,3\}$, each with a different input load alphabet and input load distribution, as a function of the average AES power $P$.}%
\label{fig:bin_I}%
\end{figure}

Next, we use the discrete uniform distribution to compare the privacy
protection achieved by the information theoretical optimal policy derived here,
with different heuristic policies. In this case, the input load has a uniform distribution $U(x)$ with input load
alphabet $\mathcal{X}=\left\{  0,c,2c,...,(N-1)c\right\}  $, where $c=\frac
{2}{N-1}$ is a constant used to impose a mean value of $\mathds{E}[X]=1$.
Based on Theorem \ref{Discrite_Output_Input_Alphabet}, the output load
alphabet can be limited to $\mathcal{X}$ without loss of optimality.
We set $N=21$ and in Fig. \ref{f:uniform} we plot the privacy-power function for the
optimal strategy obtained by the BA algorithm together with
the privacy-power functions of the following two heuristic strategies:

\textbf{Time Division:} In this policy, at each time instant, the EMU gets all
the energy needed by the user, either from the AES or from the grid, but not
from both simultaneously. Then, to satisfy the average power constraint at the
AES, the EMU obtains energy from the AES with probability $\frac
{P}{\mathds{E}[X]}$. The information leaked to the UP, is thus given by%

\begin{align*}
\begin{split}
I(X;Y)   ={}&H(X)-H(X|Y=0)\frac{P}{\mathds{E}[X]}\\&-H(X|Y=x)\left(  1-\frac
{P}{\mathds{E}[X]}\right),
\end{split}\\
 ={}& \left(  1-\frac{P}{\mathds{E}[X]}\right) \log_{2}N .
\end{align*}

\textbf{Limit Maximum Output Load:} In this policy, we use the AES to
limit the maximum energy received from the grid. At each time instant, we get
all the energy from the grid $X(t)=Y(t)$ if $X(t)<kc$, whereas if $X(t)\geq
kc$ we get $Y(t)=kc$ from the grid and the remaining energy is taken from the
AES. In this case, for each $k=0,...,N-1$, the average power requested from the AES is
given by
\[
P=\left(  N-1-k\right)  (N-k)\frac{c}{2N},
\]
and the information leaked to the UP is%
\begin{align*}
I(X;Y)  & =H(X)-\Pr(Y=kc)H(X|Y=kc),\\
& =\log_{2}N-\frac{N-k}{2N}\log_{2}(N-k).
\end{align*}
In Fig. \ref{f:uniform}, we can observe that given an average power limited AES, the privacy achieved by both of these heuristics is significantly lower than that of the
optimal EMU policy.

\subsection{Multi-user Scenario}

Next we consider a multi-user scenario with $N=3$ users. We assume equal
binary load levels $H_{i}=1$ and $L_{i}=0$, but different average energy
demands with $p_{1}=0.9,$ $p_{2}=0.5,$ and $p_{3}=0.1$; thus we have
$P_{X_{1}}=0.1$, $P_{X_{2}}=0.5$, $P_{X_{3}}=0.9$. Fig. \ref{fig:bin_I2}
illustrates the privacy for each user $\mathcal{I}_{\mathsf{B}_{i}}%
(P_{i}^{\ast})$ as a function of the average power $P$ available at the AES.
Notice that, although users 1 and 3, in the absence of an AES, leak the same
amount of information to the UP, since $H_{\mathsf{B}}(0.1)=H_{\mathsf{B}%
}(0.9),$ user 1 achieves perfect privacy much more rapidly since it has a lower average
energy demand. Also note that, user 3 achieves
perfect privacy for a much higher value of $P$, even compared to user 2, which
leaks the highest amount of information when there is no AES, as it has the
highest entropy.

Remember that, as opposed to the exponential input load scenario, in the
binary case, the privacy-power function $\mathcal{I}_{\mathsf{B}_{i}}%
(P_{i}^{\ast})$ for each user does not depend solely on the average power
demand of the user, but on both of the parameters $\Delta_{i}$ and $p_{i}$. To
illustrate this dependence, we consider a scenario again with $N=3$ users, but
with equal average power demands $P_{X_{i}}=\Delta_{i}\left(  1-p_{i}\right)
$, while $L_{i}=0$ for all $i$. We choose different parameters $\Delta_{i}$
and $p_{i}$ for each user. Fig. \ref{fig:bin_I} again shows the privacy of each user as a function of the
average power $P$. Observe that the optimal power
allocation quickly reduces the information leaked by user 2, and achieves
perfect privacy for this user much before the other two, although this is the
user leaking the most amount of information in the absence of an AES. The
input power loads for users 1 and 3 have equal entropy, but with different
behaviours; user 1 demands large amounts of energy but very rarely, while user
3 demands low amounts of energy very frequently. The optimal EMU policy seen
by these users also differs significantly. While for user 1 the privacy-power
function is a concave monotonically decreasing function, for user 3 the
privacy-power function is monotonically decreasing but piecewise convex.

\begin{figure}[t]
\psfrag{I(P) bits}{\footnotesize$\mathcal{I}(P)$ bits}
\psfrag{P}{\footnotesize$P$}
\centering\includegraphics[width=3in]{./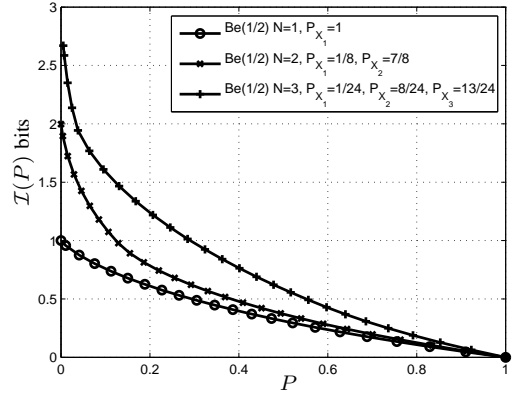}\caption{$\mathcal{I}%
(P)$ with respect to the average AES power $P$ for binary input loads with
different number of users.}%
\label{fig:bin}%
\end{figure}

Next, we study the effect of the number of users on the privacy-power
function. In Fig. \ref{fig:bin}, we depict the optimal information leakage
rate with respect to the available average AES power for binary input loads
with different number of users $N=\{1,2,3\}$. We can observe that with more
than one user, we have different regimes of operation corresponding to the
number of users that receive energy from the grid. Similarly, in Fig.
\ref{fig:exp} we consider the scenario with exponential input loads. In both
models, regardless of the number of users in the system the total average
power consumed by the users is fixed to $P_{X}$. In the figures we set
$P_{X}=1$. As expected, if the average power provided by the AES is equal to
the total average power demanded by the users, perfect privacy can be achieved. Instead, as the average power of the
AES goes to zero, all the information is revealed to the UP, and thus, the
information leakage rate is equal to the sum of the entropies of all the input
loads. In between these two extremes the privacy-power function exhibits a
monotone decreasing convex behaviour, and the information leakage rate
increases with the number of users in the system.

\begin{figure}[t]
\psfrag{I(P) bits}{\footnotesize$\mathcal{I}(P)$ bits}
\psfrag{P}{\footnotesize$P$}
\centering\includegraphics[width=3in]{./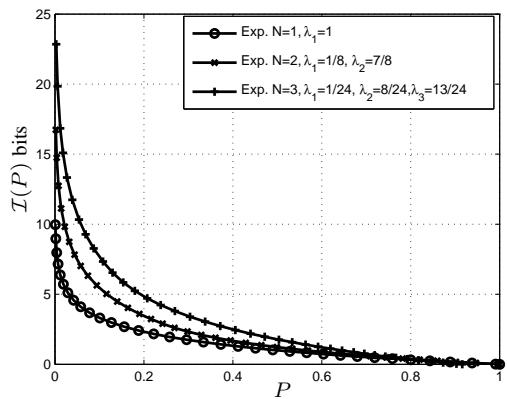}\caption{$\mathcal{I}%
(P)$ with respect to the average AES power $P$ for exponential input loads
with different number of users.}%
\label{fig:exp}%
\end{figure}

\section{Conclusions}

\label{sec:con}

We have introduced and studied the privacy-power function, $\mathcal{I}(P)$,
which characterizes the achievable information theoretic privacy in a
multi-user SM system in the presence of an AES. We have provided a
single-letter information theoretic characterization for $\mathcal{I}(P)$, and
showed that it can be evaluated numerically when the input loads are discrete.
We have also provided explicit characterization of the privacy-power function
for binary and exponential input load distributions. We have shown
that the optimal allocation of the energy provided by the AES in the
exponentially distributed input load scenario can be derived using the reverse
waterfilling algorithm, which resembles the rate-distortion function for
multiple Gaussian sources.

We believe that the proposed information theoretic framework for privacy in SM systems provides valuable tools to identify the fundamental challenges and limits for this critical problem, whose importance will only increase as SM adoption becomes more widespread. Many interesting research problems implore further studies, including time correlated input loads, systems with multiple EMUs, as well as cost and pricing issues considering dynamic pricing over time.


\bibliographystyle{ieeetran}
\bibliography{ref2}

\end{document}